\theoremstyle{definition}
\newtheorem{prop}{Proposition}
\title{Minimal control power of controlled dense coding and genuine tripartite entanglement}
\author[1]{Changhun Oh}
\author[1]{Hoyong Kim}
\author[1,2]{Kabgyun Jeong}
\author[1,*]{Hyunseok Jeong}
\affil[1]{Center for Macroscopic Quantum Control, Department of Physics and Astronomy, Seoul National University, Seoul, 08826, Korea}
\affil[2]{School of Computational Sciences, Korea Institute for Advanced Study, Seoul, 02455, Korea}
\affil[*]{h.jeong37@gmail.com}
\begin{abstract}
We investigate minimal control power (MCP) for controlled dense coding defined by the channel capacity.
We obtain MCPs for extended three-qubit Greenberger-Horne-Zeilinger (GHZ) states and generalized three-qubit $W$ states. Among those GHZ states, the standard GHZ state is found to maximize the MCP and so does the standard $W$ state among the $W$-type states. 
We find the lower and upper bounds of the MCP and show for pure states that the lower bound, zero, is achieved if and only if the three-qubit state is biseparable or fully separable. The upper bound is achieved only for the standard GHZ state. Since the MCP is nonzero only when three-qubit entanglement exists, this quantity may be a good candidate to measure the degree of genuine tripartite entanglement.
\end{abstract}
\begin{document}

\maketitle

\section*{Introduction}
Superdense coding \cite{bennett1992} is one of the simplest examples showing the power of quantum entanglement. It allows one to transmit two bits of classical information by sending only one qubit if a maximally entangled state is initially shared by the sender and receiver. It has been studied theoretically \cite{barenco1995, hausladen1996, bose2000, bowen2001, hiroshima2001, ziman2003, bruss2004, mozes2005} as well as experimentally \cite{mattle1996, fang2000, mizuno2005}.

Quantum teleportation \cite{bennett1993} is another intriguing example utilizing the power of quantum entanglement. It is a protocol to transmit an unknown quantum state using classical communication and an initially shared maximally entangled state. In fact, the equivalence of quantum teleportation and superdense coding with maximally entangled states has been proven \cite{werner2001}. 

Controlled dense coding \cite{hao2001, jing2003} and controlled quantum teleportation \cite{karlsson1998} have been proposed as extensions of superdense coding and quantum teleportation. The standard dense coding and quantum teleportation involve only two parties, a sender and a receiver, and assume that they share a maximally entangled state in advance. In controlled dense coding and teleportation, a third party participates in the protocol as a controller, and a tripartite entangled state is shared among the three parties. The controller in each protocol can control the channel capacity and the teleportation fidelity of the other two parties, respectively. Recently, the concepts of control power (CP) and minimal control power (MCP) of the controlled teleportation have been suggested to quantify how much teleportation fidelity can be controlled by the controller \cite{li2014, li2015, jeong2016, hop2016}. CP is defined as the difference between teleportation fidelities with and without the controller's assistance. MCP is defined as the minimal value of CP among all possible permutations of the three qubits. Similarly, CP has been defined in controlled remote state preparation schemes as well \cite{li2016}.

Motivated by the similarity between quantum teleportation and superdense coding, we here define CP and MCP of controlled dense coding of three-qubit states to quantify how much the dense coding channel capacity can be controlled by the controller. We also calculate CP and MCP for two representative tripartite entangled states: the extended Greenberger-Horne-Zeilinger (GHZ) state and the generalized $W$ state. We show that the standard GHZ state and the standard $W$ state have maximal MCP values for each class, which implies that MCP can be a candidate to capture the genuine tripartite entanglement of pure states. We also find the lower and upper bounds and analyze the properties in terms of genuine tripartite entanglement.

This paper is organized as follows: First, we review the controlled dense coding scheme and define CP and MCP for controlled dense coding. Then we calculate CP and MCP for important three-qubit states such as the extended GHZ states and generalized $W$ states. Furthermore, we investigate the properties of MCP in terms of genuine tripartite entanglement. Finally, we summarize and conclude the study.

\section*{Results}
\subsection*{Controlled dense coding and minimal control power}
\label{sec:cdc}
In the standard scenario of superdense coding \cite{bennett1992}, the sender, Alice, and the receiver, Bob, initially share a Bell state, and Alice encodes classical information by performing a local operation on her qubit and then sends it to Bob. After receiving her qubit, Bob performs a two-qubit measurement so that he can recover the classical information that Alice wants to transmit. Thus, Alice is able to transmit two classical bits by sending one qubit of a Bell state that is initially shared between them. Indeed, partially entangled states can be used in the dense coding scheme instead of maximally entangled states, and the number of bits that Alice can transmit to Bob using general two-qubit state $\rho_{12}$ can be quantified by the dense coding channel capacity \cite{bowen2001, bruss2004}
\begin{equation}
C(\rho_{12})=1+S(\rho_2)-S(\rho_{12})\label{cc1}
\end{equation}
where Alice has the first qubit, Bob the second, and $S(\rho)=-\text{Tr}(\rho \log_2\rho)$ is the von Neumann entropy. It is easy to confirm that the dense coding channel capacity of a Bell state is 2 bits. Note that, in this paper, $C(\rho_{jk})$ means the channel capacity when Alice has the $j$th qubit, Bob the $k$th qubit, and $C(\rho_{12})=C(\rho_{21})$ for a pure state $\rho_{12}$.

In the controlled dense coding scenario \cite{hao2001}, there is another party, Charlie, who plays the role as a controller, and the three parties share a three-qubit state $\rho_{123}$. To maximize the channel capacity between Alice and Bob, Charlie measures his qubit on an optimal basis that maximizes the average dense coding channel capacity and broadcasts the measurement outcome to Alice and Bob. After receiving the measurement outcome, Alice and Bob perform the dense coding scheme. We define the controlled dense coding channel capacity that Alice and Bob can achieve using this protocol as,
\begin{align}
C^{jkl}_{CD}&(\rho_{123})=\max_U[\langle 0|U\rho_j U^\dagger|0\rangle C(\rho^0_{kl})+\langle 1|U\rho_jU^\dagger|1\rangle C(\rho^1_{kl})], \label{cc2}
\end{align}
where Charlie, Alice, and Bob have the $j$th, $k$th, and $l$th qubit, respectively ($j,k,$ and $l$ are distinct numbers in $\{1,2,3\}$). The maximum is taken over all $2 \times 2$ unitary matrices $U$, which correspond to Charlie's choice of measurement basis. $\rho_j=\text{Tr}_{kl}(\rho_{123})$ and $\rho_{kl}^i$ is the quantum state that Alice and Bob share when Charlie's measurement outcome is $i$, where $i\in\{0,1\}$.

On the other hand, we now consider the channel capacity that Alice and Bob achieve without Charlie's assistance. In this case, the quantum state that Alice and Bob share is the reduced density operator of Alice and Bob, $\rho_{kl}=\text{Tr}_j (\rho_{123})$ where Alice and Bob have the $k$th and $l$th qubit, respectively. Thus, the channel capacity without Charlie's assistance is given by $C(\rho_{kl})$, which we will denote by $C_j(\rho_{kl})$ to specify Charlie's qubit. Finally, we define the CP of Charlie as the difference between the channel capacities with and without Charlie's assistance,
\begin{equation}
P^{jkl} (\rho_{123})\equiv C_{CD}^{jkl}-C_j(\rho_{kl})
\end{equation}
and also define the minimal control power (MCP) 
\begin{align}
P(\rho_{123})\equiv \min_{j,k,l}  P^{jkl} (\rho_{123})
\end{align}
where the minimum is taken over all possible permutation of $\{j,k,l\}$. MCP is defined in the same way as for controlled teleportation \cite{jeong2016} by replacing teleportation fidelity with channel capacity.

\subsection*{Examples}
\label{sec:ex}
Before we investigate the properties of MCP, let us calculate the MCPs of certain three-qubit states such as the extended GHZ states and generalized $W$ states \cite{acin2001}.
\subsubsection*{Extended GHZ states}
Let $|\psi_{\text{eGHZ}}\rangle_{123}$ be a state defined by
\begin{equation}
|\psi_{\text{eGHZ}}\rangle_{123}=\lambda_1|000\rangle+\lambda_2|110\rangle+\lambda_3|111\rangle,
\end{equation}
where the coefficients $\lambda_i\geq0$ and $\sum_i\lambda_i^2=1$ \cite{acin2001}. The state $|\psi_{\text{eGHZ}}\rangle$ is here called an extended GHZ state and $\psi_{\text{eGHZ}}=|\psi_{\text{eGHZ}}\rangle\langle\psi_{\text{eGHZ}}|_{123}$. Note that the extended GHZ states are invariant under the interchange of the first and second qubit.

To calculate CP, we need to obtain the channel capacities with and without Charlie's assistance for each permutation of $\{j, k, l\}$. The channel capacities with assistance can be obtained as follows:
\begin{align}
C^{123}_{CD}(\psi_{\text{eGHZ}})&=C^{132}_{CD}=C^{213}_{CD}=C^{231}_{CD}=1+h\bigg(\frac{1+\sqrt{1-4 \lambda_1^2\lambda_3^2}}{2}\bigg), \label{cc1} \\
C^{312}_{CD}(\psi_{\text{eGHZ}})&=C^{321}_{CD}=1+h(\lambda_1^2), \label{cc2}
\end{align}
where $h(x)\equiv-x \log_2 x-(1-x) \log_2 (1-x)$ is the binary entropy function. We have only two distinct values of the channel capacities with assistance because the extended GHZ states are invariant under the interchange of the first and second qubit and $C(\rho_{jk}^i)=C(\rho_{kj}^i)$ for pure states. An interesting fact about the extended GHZ states is that Charlie's measurement basis that maximizes the average channel capacity is always $\{(|0\rangle+|1\rangle)/\sqrt{2},(|0\rangle-|1\rangle)/\sqrt{2}\}$, regardless of the values of $\lambda_i$'s.

The channel capacities without assistance also can be obtained as:
\begin{align}
C_3(\rho_{12})&=C_3(\rho_{21})=1+h(\lambda_1^2)-h\bigg(\frac{1+\sqrt{1-4\lambda_1^2\lambda_3^2}}{2}\bigg), \label{c1} \\
C_1(\rho_{23})&=C_2(\rho_{13})=1-h(\lambda_1^2)+h\bigg(\frac{1+\sqrt{1-4\lambda_1^2\lambda_3^2}}{2}\bigg), \label{c2} \\
C_2(\rho_{31})&=C_1(\rho_{32})=1. \label{c3} 
\end{align}
Thus, the CPs are given by
\begin{align}
P^{123}(\psi_{\text{eGHZ}})&=P^{213}=h(\lambda_1^2), \\
P^{132}(\psi_{\text{eGHZ}})&=P^{231}=P^{312}=P^{321}=h\bigg(\frac{1+\sqrt{1-4 \lambda_1^2 \lambda_3^2}}{2}\bigg). 
\end{align}
Since $h\big(\frac{1+\sqrt{1-4 \lambda_1^2 \lambda_3^2}}{2}\big)\leq h(\lambda_1^2)=h\big(\frac{1+\sqrt{1-4 \lambda_1^2(\lambda_2^2+\lambda_3^2)}}{2}\big)$, MCP of the extended GHZ states is
\begin{equation}
P(\psi_{\text{eGHZ}})=h\bigg(\frac{1+\sqrt{1-4 \lambda_1^2 \lambda_3^2}}{2}\bigg) \label{ghz}
\end{equation}
which has the maximum value 1 if and only if $\lambda_1^2=1/2$ and $\lambda_3^2=1/2$, i.e., when the state is the standard GHZ state, $(|000\rangle+|111\rangle)/\sqrt{2}$. In addition, MCP is zero if and only if $\lambda_1=0$ or $\lambda_3=0$, i.e., the state is biseparable or fully separable \cite{acin2001-2,dur2000}. These facts imply that MCP is closely related to the genuine tripartite entanglement of pure states. To investigate the meaning of the result in terms of the tripartite entanglement, let us consider the three-tangle $\tau$ \cite{coffman2000, dur2000}, which is defined for a pure three-qubit state $|\psi\rangle_{123}$ as
\begin{equation}
\tau=\mathcal{C}^2_{j(kl)}-\mathcal{C}_{jk}^2-\mathcal{C}_{jl}^2,
\end{equation}
where $\mathcal{C}_{jk}=\mathcal{C}(\rho_{jk})=\mathcal{C}(\text{Tr}_l(\psi_{123}))$, $\mathcal{C}_{j(kl)}=\mathcal{C}(\psi_{j(kl)})$, $\psi_{123}=|\psi\rangle\langle\psi|_{123}$, and $\mathcal{C}$ is Wootters' concurrence \cite{hill1997,wootters1998}. The three-tangle for the extended GHZ states is calculated as
\begin{equation}
\tau=4 \lambda_1^2 \lambda_3^2.
\end{equation}
Thus, MCP is a monotonically increasing function of the three-tangle $\tau$,
\begin{equation}
P(\psi_{\text{eGHZ}})=h\bigg(\frac{1+\sqrt{1-\tau}}{2}\bigg).
\end{equation}
Indeed, the three-tangle $\tau$ is known as a value that quantifies genuine tripartite entanglement. Therefore, MCP can also be used to quantify the genuine tripartite entanglement for extended GHZ states. Note that MCP of the controlled teleportation is also a monotonically increasing function of the three-tangle \cite{jeong2016}.

In the special case that $\lambda_2=0$, the state $|\psi_{\text{eGHZ}}\rangle$ is reduced to generalized GHZ states,
\begin{equation}
|\psi_{\text{gGHZ}}\rangle_{123}=\lambda_1|000\rangle+\lambda_3|111\rangle.
\end{equation}
Using $\lambda_1^2+\lambda_3^2=1$, we can simplify channel capacities with Charlie's assistance in equations~\eqref{cc1},\eqref{cc2} as
\begin{align}
C^{ijk}_{CD}=1+h(\lambda_1^2),
\end{align}
and channel capacities without Charlie's assistance in equations~\eqref{c1},\eqref{c2},\eqref{c3} as
\begin{align}
C_i(\rho_{jk})=1,
\end{align}
for all $i,j,k$.
Thus, MCP of generalized GHZ states can be obtained as
\begin{equation}
P(\psi_{\text{gGHZ}})=h(\lambda_1^2).
\end{equation}
The difference between channel capacities with and without Charlie's assistance and MCP for generalized GHZ states are shown in Fig. \ref{capacity}. The figure shows how much channel capacity is affected by controller's assistance for generalized GHZ states.

For another special case ($\lambda_1=1/\sqrt{2}$),
\begin{equation}
|\psi_{\text{MS}}\rangle_{123}=\frac{1}{\sqrt{2}}|000\rangle+\lambda_2|110\rangle+\lambda_3|111\rangle,
\end{equation}
called the maximal slice states \cite{carteret2000}, MCP is given by
\begin{equation}
P(\psi_{\text{MS}})=h\bigg(\frac{1+\sqrt{2}\lambda_2}{2}\bigg).
\end{equation}

\begin{figure}
\centering
\begin{minipage}[h]{0.45\linewidth}
\includegraphics[width=\linewidth]{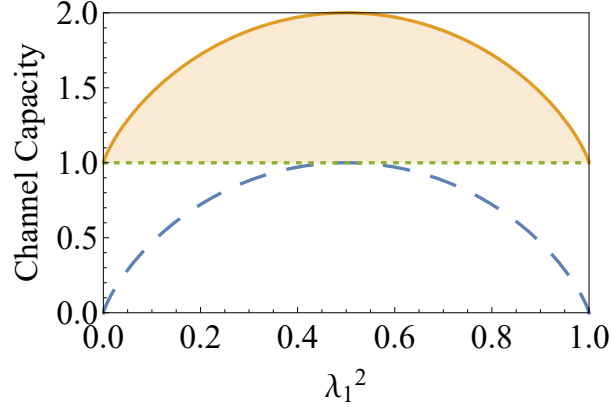}
\end{minipage}
\caption{\textbf{Channel capacities with and without Charlie's assistance and MCP for generalized GHZ states}. The solid curve represents channel capacities with Charlie's assistance, the dotted line those without Charlie's assistance, and dashed curve MCP of generalized GHZ states. The shaded region represents the amount of channel capacity controlled by Charlie.
}\label{capacity}
\end{figure}

\subsubsection*{Generalized $W$ states}
Let $|\psi_{\text{g}W}\rangle$ be a state defined by
\begin{equation}
|\psi_{\text{g}W}\rangle_{123}=\lambda_1|100\rangle+\lambda_2|010\rangle+\lambda_3|001\rangle,
\end{equation}
where the coefficients $\lambda_i\geq0$ and $\sum_i\lambda_i^2=1$ \cite{acin2001}. The state $|\psi_{\text{g}W}\rangle$ is here called a generalized $W$ state and $\psi_{\text{g}W}=|\psi_{\text{g}W}\rangle\langle\psi_{\text{g}W}|_{123}$.
The MCP of generalized $W$ states is calculated in the same way as before. First, the channel capacity with assistance is
\begin{equation}
C^{jkl}_{CD}(\psi_{\text{g}W})=1+(\lambda_k^2+\lambda_l^2)h\bigg(\frac{\lambda_1^2}{\lambda_k^2+\lambda_l^2}\bigg).
\end{equation}
In contrast to the case of extended GHZ states, the measurement basis that maximizes the channel capacity is $\{|0\rangle,|1\rangle\}$, regardless of the values of $\lambda_i$'s.
The channel capacity without assistance is
\begin{equation}
C_j(\rho_{kl})=1+h(\lambda_l^2)-h(\lambda_j^2).
\end{equation}
After simplification, the CPs can be obtained as
\begin{equation}
P^{jkl}(\psi_{\text{g}W})=(\lambda_j^2+\lambda_k^2)h\bigg(\frac{\lambda_j^2}{\lambda_j^2+\lambda_k^2}\bigg),
\end{equation}
and, thus, MCP can be written as
\begin{align}
P(\psi_{\text{g}W})&=\min_{j,k,l}\bigg[(\lambda_j^2+\lambda_k^2)h\bigg(\frac{\lambda_j^2}{\lambda_j^2+\lambda_k^2}\bigg)\bigg]  =(\lambda_j^2+\lambda_k^2)h\bigg(\frac{\lambda_j^2}{\lambda_j^2+\lambda_k^2}\bigg) \quad \text{if } \max\{\lambda_j^2,\lambda_k^2,\lambda_l^2\}=\lambda_l^2. 
\end{align}
It is easily checked that MCP of a generalized $W$ state is zero if and only if one $\lambda_i$ is zero, i.e., the state is biseparable or fully separable. In addition, we prove that the standard $W$ state, $(|100\rangle+|010\rangle+|001\rangle)/\sqrt{3}$, has maximal MCP among the generalized $W$ states as follows.

\begin{prop}
Let $\psi_{W}$ be the standard $W$ state, i.e., $\psi_{W}=\psi_{\text{g}W}$ with $\lambda_1=\lambda_2=\lambda_3=1/\sqrt{3}$. For any generalized $W$ state $\psi_{\text{g}W}$,
\begin{equation}
P(\psi_{\text{g}W})\leq\frac{2}{3}=P(\psi_{W}).
\end{equation}
\end{prop}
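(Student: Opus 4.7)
The plan is to exploit the closed-form expression for $P(\psi_{\text{g}W})$ stated just before the proposition and to bound its two factors separately. Using the identity $h(x)=h(1-x)$, the quantity $P^{jkl}$ is symmetric in $j,k$, so the only real choice in the minimization is the index $l$ playing the role of Bob. The proposition tells us that this minimum is attained when $l$ is the index that maximizes $\lambda_i^2$, giving
\begin{equation*}
P(\psi_{\text{g}W})=(\lambda_j^2+\lambda_k^2)\,h\!\left(\frac{\lambda_j^2}{\lambda_j^2+\lambda_k^2}\right),\qquad \lambda_l^2=\max_i\lambda_i^2.
\end{equation*}

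First I would use the normalization $\lambda_1^2+\lambda_2^2+\lambda_3^2=1$ together with the maximality assumption on $\lambda_l^2$. Since the largest of three nonnegative numbers summing to $1$ must be at least $1/3$, we have $\lambda_l^2\geq 1/3$, and therefore
\begin{equation*}
\lambda_j^2+\lambda_k^2 \;=\; 1-\lambda_l^2 \;\leq\; \tfrac{2}{3}.
\end{equation*}
Next, I would invoke the elementary bound $h(x)\leq 1$ for all $x\in[0,1]$, with equality only at $x=1/2$. Multiplying the two bounds yields $P(\psi_{\text{g}W})\leq \tfrac{2}{3}\cdot 1 = \tfrac{2}{3}$.

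Finally, I would verify equality. Saturating the bound requires $\lambda_l^2=1/3$ and $\lambda_j^2/(\lambda_j^2+\lambda_k^2)=1/2$, i.e.\ $\lambda_j^2=\lambda_k^2$. Combined with $\lambda_j^2+\lambda_k^2=2/3$, this forces $\lambda_j^2=\lambda_k^2=\lambda_l^2=1/3$, which is exactly the standard $W$ state. A direct substitution confirms $P(\psi_W)=(2/3)h(1/2)=2/3$, closing the argument.

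There is essentially no serious obstacle: once one accepts the formula for MCP that the paper has already derived, the bound is a two-line estimate and the equality discussion is immediate. The only place a careful reader might want more detail is the reduction that the minimum over $(j,k,l)$ singles out the largest $\lambda_i^2$ as the receiver's index, but that is stated as part of the expression used, so no further work is needed here.
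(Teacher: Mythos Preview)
Your proof is correct and rests on the same two ingredients as the paper's: the bound $h(x)\leq 1$ and the normalization $\sum_i\lambda_i^2=1$. The paper packages this as a short contradiction (if the minimum exceeded $2/3$ then all three pairwise sums $\lambda_j^2+\lambda_k^2$ would exceed $2/3$, forcing $\sum_i\lambda_i^2>1$) rather than your direct estimate via $\lambda_l^2\geq 1/3$, and it omits the equality discussion you include, but the substance is identical.
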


\begin{proof}
Suppose that there exists a generalized $W$ state such that $P(\psi_{\text{g}W})>\frac{2}{3}$, that is
\begin{equation}
(\lambda_j^2+\lambda_k^2)h\bigg(\frac{\lambda_j^2}{\lambda_j^2+\lambda_k^2}\bigg)>\frac{2}{3},
\end{equation}
for all distinct $j, k, l$ in $\{1,2,3\}$. Since the binary entropy is bounded from above by 1, i.e., $h(x)\leq 1$ for all $x\in[0,1]$, we have
\begin{equation}
\lambda_j^2+\lambda_k^2>\frac{2}{3},
\end{equation}
for all distinct $j, k, l$. This is, however, contradicting the normalization condition $\lambda_1^2+\lambda_2^2+\lambda_3^2=1$.
\end{proof}

\begin{figure}
\centering
\begin{minipage}[h]{0.9\linewidth}
\includegraphics[width=\linewidth]{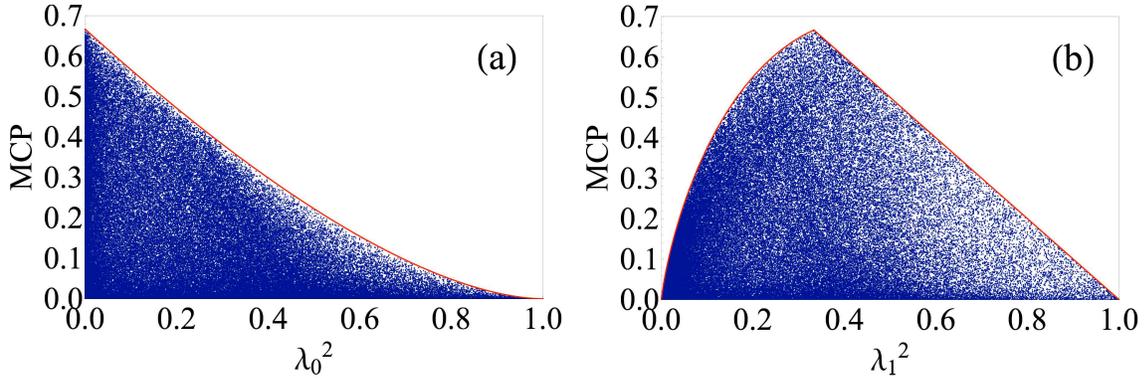}
\end{minipage}
\caption{\textbf{MCPs of randomly generated extended $W$ states.} (a) MCPs of randomly generated extended $W$ states plotted against $\lambda_0^2$. The solid curve corresponds to MCPs of extended $W$ states that have $\lambda_1=\lambda_2=\lambda_3$ with $\lambda_0^2$'s given. (b) MCPs of randomly generated extended $W$ states plotted against $\lambda_1^2$. The solid curve corresponds to the MCPs of extended $W$ states that have $\lambda_0=0$ and $\lambda_2=\lambda_3$ with $\lambda_1^2$'s given.
}\label{numerical}
\end{figure}

Let us now consider more general states, the extended $W$ states \cite{dur2000},
\begin{equation}
|\psi_{\text{e}W}\rangle_{123}=\lambda_0|000\rangle+\lambda_1|100\rangle+\lambda_2|010\rangle+\lambda_3|001\rangle,
\end{equation}
where $\lambda_i\geq0$ and $\sum_i\lambda_i^2=1$. Although MCP for a given extended $W$ state can be calculated easily, it is not easy to obtain an analytic expression for arbitrary extended $W$ states. Nevertheless, we generated $10^5$ extended $W$ states randomly and calculated their MCPs, and no extended $W$ state has been found that has a larger MCP than that of the standard $W$ state, $2/3$, which is shown in Fig.~\ref{numerical}. Furthermore, we can see from Fig.~\ref{numerical}(a) that for a given $\lambda_0^2$, no extended $W$ state has been found that has a larger MCP than that of the extended $W$ state with $\lambda_1=\lambda_2=\lambda_3$. We also conclude that MCP of the extended $W$ states with $\lambda_1=\lambda_2=\lambda_3$ is a monotonically decreasing function of $\lambda_0^2$. In addition, we also plot the same data against $\lambda_1^2$, which is shown in Fig.~\ref{numerical}(b). From the figure, we can also see that for a given $\lambda_1^2$, no extended $W$ state has been found that has a larger MCP than that of the extended $W$ state with $\lambda_2=\lambda_3$ and $\lambda_0=0$.

\subsection*{Properties of minimal control power} \label{sec:pro}
In this section, we investigate properties of MCP in terms of genuine tripartite entanglement. Before we start to introduce the properties of MCP, let us rewrite equation~\eqref{cc2} as

\begin{equation}
C^{jkl}_{CD}(\rho_{123})=1+\max_\mathcal{E} \bigg[\sum_{i=0}^1 p_i[S(\rho_l^i)-S(\rho_{kl}^i)]\bigg]
\end{equation}
where the maximum is taken over all possible ensembles $\mathcal{E}=\{p_i,\rho^i_{kl}\}$ satisfying $\sum_{i=0}^1 p_i \rho^i_{kl}=\rho_{kl}$, which corresponds to maximization over all Charlie's possible measurement basis \cite{hughston1993}. Thus, CP also can be written as
\begin{align}
P^{jkl}(\rho_{123})&=\max_\mathcal{E}\bigg[\sum_{i=0}^1 p_i[S(\rho_l^i)-S(\rho_{kl}^i)]\bigg]-S(\rho_l)+S(\rho_{kl})=\max_\mathcal{E}\bigg[\sum_{i=0}^1 p_i I(k\rangle l|\rho_{kl}^i)\bigg]-I(k\rangle l|\rho_{kl}), 
\end{align}
where we have used the definition of the coherent information, $I(k\rangle l|\rho_{kl})=S(\rho_l)-S(\rho_{kl})$, which is also known as a negative quantum conditional entropy \cite{wilde2013, nielsen2010}. Using the properties of the von Neumann entropy and convexity of coherent information, we can prove the following proposition.

\begin{prop} For general three-qubit states $\rho_{123}$, which can be mixed or pure,
\begin{equation}
0\leq P(\rho_{123}) \leq 1.
\end{equation}
\end{prop}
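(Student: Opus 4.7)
The plan is to convert $P^{jkl}$ from the coherent-information form given just above the proposition into a difference of Holevo quantities, then bound each piece by standard entropic inequalities. For an ensemble $\mathcal{E}=\{p_i,\rho^i_{kl}\}$ with $\sum_i p_i\rho^i_{kl}=\rho_{kl}$, introduce the Holevo quantity on subsystem $X$,
\begin{equation*}
\chi_X(\mathcal{E})\equiv S(\rho_X)-\sum_i p_i\, S(\rho^i_X),
\end{equation*}
where on subsystem $l$ the reduced ensemble is $\rho^i_l=\text{Tr}_k\,\rho^i_{kl}$. A direct rearrangement of the rewritten formula for $P^{jkl}$ then gives
\begin{equation*}
P^{jkl}(\rho_{123})=\max_{\mathcal{E}}\bigl[\chi_{kl}(\mathcal{E})-\chi_l(\mathcal{E})\bigr].
\end{equation*}

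For the lower bound, I would invoke monotonicity of the Holevo quantity under CPTP maps, applied here to the partial trace over subsystem $k$: $\chi_{kl}(\mathcal{E})\geq\chi_l(\mathcal{E})$ for every $\mathcal{E}$. This is equivalent to the convexity of coherent information in the state that the authors cite just before stating the proposition. Every value inside the maximum is therefore nonnegative, so $P^{jkl}\geq 0$ and in particular $P(\rho_{123})=\min_{j,k,l}P^{jkl}\geq 0$.

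For the upper bound, I would use two standard facts: (i) the Holevo quantity of any two-outcome ensemble is bounded by the binary entropy of its weights, $\chi_{kl}(\mathcal{E})\leq h(p_0)\leq 1$, which follows from the mixing inequality $S(\sum_i p_i\rho^i_{kl})\leq h(p_0)+\sum_i p_i S(\rho^i_{kl})$; and (ii) $\chi_l(\mathcal{E})\geq 0$ by concavity of the von Neumann entropy. Together they give $\chi_{kl}-\chi_l\leq 1$ for every allowed $\mathcal{E}$, so $P^{jkl}\leq 1$ and $P(\rho_{123})\leq 1$.

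No individual step is a serious obstacle once the Holevo reformulation is in place; the main thing to keep track of is that the upper bound genuinely uses the restriction to two outcomes (inherited from Charlie holding a single qubit), since otherwise the Shannon entropy of the weight distribution could exceed $1$ and the argument would only yield the looser bound $P\leq\log 2$ times the outcome count.
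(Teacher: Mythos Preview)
Your proof is correct and is essentially the paper's own argument repackaged in Holevo-quantity notation: the lower bound via $\chi_{kl}\geq\chi_l$ is exactly the convexity of coherent information the paper invokes, and your upper-bound ingredients $\chi_l\geq 0$ and $\chi_{kl}\leq h(p_0)\leq 1$ are precisely the two entropy-mixing inequalities the paper chains together in its display~\eqref{upper}. Your explicit remark that the bound $h(p_0)\leq 1$ relies on Charlie's qubit giving only two outcomes is a nice clarification the paper leaves implicit.
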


Before we start to prove the proposition, we review the inequalities for the von Neumann entropy of the mixture of quantum states $\rho_i$ \cite{nielsen2010},
\begin{equation}
\sum_i p_i S(\rho_i)\leq S\bigg(\sum_i p_i \rho_i \bigg)\leq \sum_i p_i S(\rho_i)+h(p_i), \label{entropy}
\end{equation}
where the equality of the first inequality holds if and only if all the states $\rho_i$ for which $p_i>0$ are identical, and the equality of the second inequality holds if and only if the states $\rho_i$ have support on orthogonal subspaces.

\begin{proof}
Let us prove the lower bound first. It can be shown that CP of $\rho_{123}$ is non-negative by using the convexity of the coherent information \cite{wilde2013, nielsen2010},
\begin{equation}
P^{jkl}(\rho_{123})=\max_\mathcal{E} \bigg[\sum_{i=0}^1 p_i I(k\rangle l|\rho_{kl}^i)-I(k\rangle l|\rho_{kl})\bigg]\geq 0. \label{convex}
\end{equation}
Thus, MCP is also non-negative, $P(\rho_{123})\geq 0$.
For the upper bound, using equation~\eqref{entropy}, we obtain for CP that
\begin{align}
P^{jkl}(\rho_{123})=\max_\mathcal{E} \bigg[\sum_{i=0}^1 p_i[S(\rho^i_l)-S(\rho^i_{kl})]\bigg]-S(\rho_l)+S(\rho_{kl})\leq\max_\mathcal{E}\bigg[-\sum_{i=0}^1 p_i S(\rho^i_{kl})\bigg]+S(\rho_{kl})\leq \max_\mathcal{E}h(p_i) \leq 1. \label{upper}
\end{align}
It is clear that MCP also has the same upper bound.
\end{proof}

We have found the lower and upper bounds of MCP of general three-qubit states. Now, let us investigate the conditions for pure states that attain the bounds by proving following two propositions.

\begin{prop}
For pure states $\rho_{123}$, $P(\rho_{123})=0$ if and only if $\rho_{123}$ is biseparable or fully separable.
\end{prop}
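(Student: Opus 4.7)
The ``if'' direction is quick. If $|\psi\rangle_{123}$ is biseparable (with fully separable as a special case), choose the permutation so that Charlie's qubit $j$ is on the isolated side, giving $|\psi\rangle=|\chi\rangle_j\otimes|\phi\rangle_{kl}$. Any projective measurement on $j$ then leaves Alice and Bob with the same pure state $|\phi\rangle\langle\phi|_{kl}$, so $C^{jkl}_{CD}=C_j(\rho_{kl})$ and $P(\rho_{123})=0$.

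For the ``only if'' direction, my plan is to combine the convexity rewriting of $P^{jkl}$ in \eqref{convex} with the tightness conditions of the entropic inequalities \eqref{entropy}. Assume $P(\rho_{123})=0$, so $P^{jkl}=0$ for some permutation. Because the convexity inequality $\sum_i p_i I(k\rangle l|\rho_{kl}^i)\ge I(k\rangle l|\rho_{kl})$ must be tight at the maximum when $P^{jkl}=0$, and because the maximum is over all rank-one measurements of Charlie, the inequality must in fact saturate for \emph{every} such measurement. Since each post-measurement $\rho_{kl}^i$ is pure for pure $\rho_{123}$, and $S(\rho_{kl})=S(\rho_j)$, this saturation is the identity
\[
\sum_i p_i\,S(\rho_l^i)\;=\;S(\rho_l)-S(\rho_j),
\]
required to hold for every orthonormal basis on Charlie's qubit.

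I would then feed two judiciously chosen bases into this identity. First, the Schmidt basis of the $j|kl$ cut, with $|\psi\rangle=\sqrt{\mu_0}|0\rangle_j|\phi_0\rangle_{kl}+\sqrt{\mu_1}|1\rangle_j|\phi_1\rangle_{kl}$; if $\mu_0$ or $\mu_1$ vanishes the state is already biseparable along $j|kl$. Otherwise $H(\mu_0,\mu_1)=S(\rho_j)$, so the identity exactly saturates the upper bound in \eqref{entropy}, and its stated equality condition forces $\rho_l^0,\rho_l^1$ to have orthogonal supports. On a single qubit this means both are pure and orthogonal, so each Schmidt vector factorises as $|\phi_i\rangle_{kl}=|\eta_i\rangle_k|\xi_i\rangle_l$ with $\langle\xi_0|\xi_1\rangle=0$. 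With the state in this form one finds $S(\rho_l)=H(\mu_0,\mu_1)=S(\rho_j)$, so the identity now reads $\sum_i p_i S(\rho_l^i)=0$ for every Charlie basis. Plugging in the conjugate basis $\{|\pm\rangle\}$ gives $p_\pm=1/2$ and $|\phi^\pm\rangle_{kl}=\sqrt{\mu_0}|\eta_0\rangle|\xi_0\rangle\pm\sqrt{\mu_1}|\eta_1\rangle|\xi_1\rangle$; a short calculation yields $\mathrm{Tr}[(\rho_l^\pm)^2]=1-2\mu_0\mu_1(1-|\langle\eta_0|\eta_1\rangle|^2)$, so requiring $\rho_l^\pm$ pure forces $|\langle\eta_0|\eta_1\rangle|=1$, i.e.\ $|\eta_0\rangle\parallel|\eta_1\rangle$. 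Hence $|\psi\rangle=|\eta_0\rangle_k\otimes|\Omega\rangle_{jl}$, biseparable along the $k|jl$ cut.

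The principal obstacle is that the Schmidt measurement alone is not strong enough --- it only kills entanglement inside each Schmidt branch and still permits GHZ-type superpositions of two product branches. The key move is to exploit that saturation holds for \emph{every} measurement and deploy the conjugate basis; its purity constraint supplies the algebraic relation $|\langle\eta_0|\eta_1\rangle|=1$ needed to collapse the remaining GHZ-like structure to a genuine bipartite product.
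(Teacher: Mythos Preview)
Your proposal is correct and follows essentially the same approach as the paper's own proof: both establish saturation of the convexity inequality for \emph{every} measurement basis of Charlie, then apply it first to the Schmidt basis of the $j|kl$ cut (forcing the branch-wise $\rho_l^i$ to be pure and orthogonal via the equality condition in \eqref{entropy}) and second to the conjugate basis $\{|\pm\rangle\}$ (forcing the $k$-side vectors to coincide), yielding biseparability along the $k|jl$ cut. Your write-up is slightly more explicit in invoking the equality case of \eqref{entropy} and in computing the purity $\mathrm{Tr}[(\rho_l^\pm)^2]$, but the logical skeleton is identical to the paper's.
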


\begin{proof}
Let $\rho_{123}=|\psi\rangle\langle\psi|_{123}$. If $\rho_{123}$ is biseparable or fully separable, it is clear that when Charlie has the separated qubit, CP is zero, and thus MCP is zero. Let us now assume that $P(\rho_{123})=0$. Thus, there exists $\{j,k,l\}$ such that $P^{jkl}(\rho_{123})=0$. We set $\{j,k,l\}=\{3,1,2\}$ without loss of generality. Equation~\eqref{convex} implies that for any ensemble $\{p_i,\rho_{12}^i\}$ that satisfies $\sum_ip_i\rho_{12}^i=\rho_{12}$,
\begin{align}
\sum_i p_i I(1\rangle 2|\rho_{12}^i)-I(1\rangle 2|\rho_{12})=\sum_i p_iS(\rho_2^i)-S(\rho_2)+S(\rho_{12})=0
\end{align}
because the maximum is zero. Let us write the state in a Schmidt decomposition form $|\psi\rangle_{123}=\sum_i\sqrt{\mu_i}|\psi^i\rangle_{12}|i\rangle_3$, and let us assume that $\mu_i\neq 0$ because $|\psi\rangle_{123}$ is trivially biseparable or fully separable if $\mu_0=0$ or $\mu_1=0$. Then, it is easy to obtain
 that $S(\rho_{12})=h(\mu_0)$. Furthermore, an ensemble set $\{p_i=\mu_i,\rho_{12}^{i*}=|\psi^i\rangle\langle\psi^i|_{12}\}$ should satisfy that
\begin{align}
\sum_i\mu_iS(\rho_2^{i*})-S(\rho_2)+h(\mu_0)=0
\end{align}
where $\rho_2^{i*}=\text{Tr}_1(\rho_{12}^{i*})$. Therefore, $\rho_2^{0*}$ and $\rho_2^{1*}$ are orthogonal; thus we denote $\rho_2^{0*}=|0\rangle\langle0|_2$ and $\rho_2^{1*}=|1\rangle\langle1|_2$ without loss of generality. Thus, the state can be rewritten as $|\psi\rangle_{123}=\sqrt{\mu_0}|\phi^0\rangle_1|00\rangle_{23}+\sqrt{\mu_1}|\phi^{1}\rangle_1|11\rangle_{23}$ where $\langle0|1\rangle=0$, but $|\phi^{0}\rangle$ and $|\phi^{1}\rangle$ are not necessarily orthogonal. Then, $S(\rho_2)=h(\mu_0)$ so that $\sum_ip_iS(\rho_2^i)=0$ for any ensemble that satisfies $\sum_ip_i\rho_{12}^i=\rho_{12}$. Now, let us choose another ensemble set $\{p_i^{**}=1/2,\rho_{12}^{i**}=|\phi^{\pm}\rangle\langle\phi^{\pm}|_{12}\}$ where $|\phi^{\pm}\rangle_{12}=\sqrt{\mu_0}|\phi^0\rangle_1|0\rangle_2\pm\sqrt{\mu_1}|\phi^1\rangle_1|1\rangle_2$. Since $\sum_i p_i^{**}S(\rho_2^{i**})=0$ should be satisfied, we require that $|\phi^0\rangle_1=|\phi^1\rangle_1$. Thus, the first qubit is separated, i.e., $|\psi\rangle_{123}$ is biseparable.
\end{proof}

\begin{prop}
For pure states $\rho_{123}$, $P(\rho_{123})=1$ if and only if $\rho_{123}$ is the standard GHZ state.
\end{prop}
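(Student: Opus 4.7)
The plan is to reverse-engineer the chain of inequalities used in the proof of Proposition 2 and identify precisely which configurations saturate it. Since $P(\rho_{123})=\min_{j,k,l} P^{jkl}=1$ and each $P^{jkl}\le 1$, equality $P^{jkl}=1$ must hold for every permutation $(j,k,l)$, and for each such permutation the maximizing ensemble $\{p_i^\ast,\rho_{kl}^{i\ast}\}$ must saturate all three steps simultaneously: $h(p_i^\ast)=1$, giving $p_0^\ast=p_1^\ast=1/2$; saturation of $S(\rho_{kl})\le\sum_i p_i^\ast S(\rho_{kl}^{i\ast})+h(p_i^\ast)$, which requires the $\rho_{kl}^{i\ast}$ to have mutually orthogonal supports; and saturation of $\sum_i p_i^\ast S(\rho_l^{i\ast})\le S(\rho_l)$, which requires $\rho_l^{0\ast}=\rho_l^{1\ast}=\rho_l$.

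Since $|\psi\rangle_{123}$ is pure, Charlie's projective measurement on his qubit produces pure conditional states $\rho_{kl}^{i\ast}$; combined with orthogonality and $p_i^\ast=1/2$, this gives $S(\rho_{kl})=1$. Pure-state duality $S(\rho_j)=S(\rho_{kl})$ then forces $\rho_j=I/2$, and since this holds for every permutation, $\rho_1=\rho_2=\rho_3=I/2$. Combining with the equal-marginal condition $\rho_l^{i\ast}=\rho_l=I/2$ forces each conditional pure state $|\phi^{i\ast}\rangle_{kl}$ to have a maximally mixed single-qubit reduction, hence to be maximally entangled; thus $\rho_{kl}$ is an equiprobable mixture of two orthonormal maximally entangled two-qubit states for every pair $(k,l)$.

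To close the argument I would invoke the three-tangle. For any maximally entangled two-qubit state $|M\rangle$, the Wootters spin-flip $|\widetilde M\rangle$ satisfies $|\langle M|\widetilde M\rangle|=\mathcal{C}(|M\rangle)=1$, so $|\widetilde M\rangle=e^{i\theta}|M\rangle$ and the projector $|M\rangle\langle M|$ is fixed under the tilde operation. The mixture $\rho_{kl}=\tfrac12(|M_0\rangle\langle M_0|+|M_1\rangle\langle M_1|)$ therefore satisfies $\widetilde\rho_{kl}=\rho_{kl}$, so $\rho_{kl}\widetilde\rho_{kl}=\rho_{kl}^2$ has square-root spectrum $(1/2,1/2,0,0)$ and Wootters' formula yields $\mathcal{C}(\rho_{kl})=0$. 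Combined with $\mathcal{C}^2_{j(kl)}=4\det\rho_j=1$, we obtain $\tau=\mathcal{C}^2_{j(kl)}-\mathcal{C}^2_{jk}-\mathcal{C}^2_{jl}=1$, the maximum value of the three-tangle; this characterizes the standard GHZ state up to local unitaries, under which both $P$ and $\tau$ are invariant. The converse direction is immediate from the extended GHZ MCP formula evaluated at $\lambda_1^2=\lambda_3^2=1/2$, $\lambda_2=0$.

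The main obstacle is the equality-saturation step: extracting all three equality conditions simultaneously from the max-over-ensembles bound, and consistently invoking purity of $|\psi\rangle_{123}$ to conclude that the conditional states are pure with maximally mixed marginals. Once $\rho_{kl}$ is identified as an equiprobable mixture of orthonormal maximally entangled two-qubit states for every pair, the closure via spin-flip invariance, Wootters' formula, and the three-tangle identity is a short calculation.
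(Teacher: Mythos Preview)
Your argument is correct, and the equality analysis you extract from the Proposition~2 chain is exactly right: for the maximizing ensemble one gets $p_0^\ast=p_1^\ast=1/2$, orthogonal pure conditional states, and $\rho_l^{0\ast}=\rho_l^{1\ast}=\rho_l$, whence $S(\rho_j)=S(\rho_{kl})=1$ and $\rho_j=I/2$ for every $j$. From there your closure differs from the paper's. The paper first proves a small lemma---that for pure $\rho_{123}$ the basis maximizing $P^{jkl}$ also maximizes $P^{jlk}$ (using $S(\rho_k^{i})=S(\rho_l^{i})$ for pure $\rho_{kl}^{i}$)---so that \emph{both} one-qubit conditional marginals agree with the unconditional ones for the \emph{same} Charlie basis; it then writes the two orthogonal conditional states in Schmidt form and invokes uniqueness of the eigendecomposition of $\rho_1$ and $\rho_2$ to force the Schmidt coefficients to be $1/2$, hence orthogonal Bell states, hence GHZ. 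Your route sidesteps that lemma entirely (since $\rho_l=I/2$ already makes each pure $\rho_{kl}^{i\ast}$ maximally entangled), and instead computes $\mathcal C(\rho_{kl})=0$ via spin-flip invariance to obtain $\tau=1$. This is a clean alternative; the trade-off is that you import the external fact ``$\tau=1$ characterizes the standard GHZ state up to local unitaries,'' which is true and standard (it follows, e.g., from the Ac\'{\i}n \emph{et al.}\ canonical form together with the CKW bound) but is not established within the paper, so you should either cite it or supply the one-line canonical-form verification. The paper's Schmidt/eigendecomposition finish is fully self-contained but requires the extra ``same basis'' observation; yours is shorter once the three-tangle characterization is granted.
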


\begin{proof}
We have already shown that MCP of the standard GHZ state is 1 from equation~\eqref{ghz}. Now, let us assume that MCP of a pure three-qubit state $\rho_{123}$ is 1 so that the CPs of $\rho_{123}$ for any $\{j,k,l\}$ are 1. Note that since $S(\rho_{kl}^i)=0$ for pure states $\rho_{123}$, CP can be written as $P^{jkl}=\max_\mathcal{E}[\sum_i p_iS(\rho_l^{i})]-S(\rho_l)+S(\rho_{kl})$. To prove that a pure state $\rho_{123}$ that attains the upper bound should be the standard GHZ state, we need to prove that the measurement basis of Charlie that maximizes $P^{jkl}$ also maximizes $P^{jlk}$ and vice versa. Let $\mathcal{E}^*=\{p_i^*,\rho^{i*}_{kl}\}$ be an ensemble that maximizes $\sum_i p_iS(\rho_l^{i})$ and $\rho_l^{i*}=\text{Tr}_k (\rho_{kl}^{i*})$, and let $\mathcal{E}^{**}=\{p_i^{**},\rho^{i**}_{kl}\}$ be an ensemble that maximizes $\sum_i p_iS(\rho_k^{i})$ and $\rho_k^{i**}=\text{Tr}_l (\rho_{kl}^{i**})$. Using that $S(\rho_l^{i*})=S(\rho_k^{i*})$ and $S(\rho_l^{i**})=S(\rho_k^{i**})$ for pure states $\rho_{kl}^{i*}$ and $\rho_{kl}^{i**}$, we have
\begin{align}
&\sum_{i=0}^1p_i^{**}S(\rho_l^{i**})\leq\sum_{i=0}^1p_i^*S(\rho_l^{i*})=\sum_{i=0}^1p_i^*S(\rho_k^{i*}), \\
&\sum_{i=0}^1p_i^*S(\rho_k^{i*})\leq\sum_{i=0}^1p_i^{**}S(\rho_k^{i**})=\sum_{i=0}^1p_i^{**}S(\rho_l^{i**}).
\end{align}
Thus, we have $\sum_ip_{i}^*S(\rho_l^{i*})=\sum_ip_{i}^{**}S(\rho_l^{i**})$ and $\sum_ip_{i}^*S(\rho_k^{i*})=\sum_ip_i^{**}S(\rho_k^{i**})$, which means that both ensembles $\mathcal{E}^*$ and $\mathcal{E}^{**}$ maximize $\sum_i p_iS(\rho_k^{i})$ and $\sum_i p_iS(\rho_l^{i})$. Therefore, the measurement basis of Charlie that maximizes $P^{jkl}$ also maximizes $P^{jlk}$ and vice versa.

Let us start to prove that if $P(\rho_{123})=1$, where $\rho_{123}=|\psi\rangle\langle\psi|_{123}$, then $\rho_{123}$ is the standard GHZ state. Since the equality of the last inequality in equation~\eqref{upper} holds if and only if $p_i=1/2$, we can write $|\psi\rangle_{123}$, up to a local unitary transformation, as
\begin{equation}
|\psi\rangle_{123}=\frac{1}{\sqrt{2}}(|\psi^0\rangle_{12}|\phi_0\rangle_3+|\psi^1\rangle_{12}|\phi_1\rangle_3), 
\end{equation}
where we have chosen $\{j,k,l\}=\{3,1,2\}$ without loss of generality, and $\{|\phi_0\rangle, |\phi_1\rangle\}$ is a measurement basis of Charlie that maximizes both $P^{312}$ and $P^{321}$. In addition, $\langle \psi^0|\psi^1\rangle=0$ because the equality of the second inequality in equation~\eqref{upper} holds. Let us write $|\psi^0\rangle_{12}=\sum_i \sqrt{\lambda_i}|i\rangle_1|i\rangle_2$ and $|\psi^1\rangle_{12}=\sum_i \sqrt{\mu_i}|\bar{i}\rangle_1|\bar{i}\rangle_2$. Note that since the equality of the first inequality in equation~\eqref{upper} holds if and only if $\rho_l^0=\rho_l^1$, we have $\rho_2^0=\rho_2^1=\rho_2$. Since we can choose $\{j,k,l\}=\{3,2,1\}$, we also have $\rho_1^0=\rho_1^1=\rho_1$. Thus,
\begin{align}
\rho_1^0=&\sum_i \lambda_i |i\rangle_1\langle i|=\sum_i \mu_i|\bar{i}\rangle_1\langle\bar{i}|=\rho_1^1, \\
\rho_2^0=&\sum_i \lambda_i |i\rangle_2\langle i|=\sum_i \mu_i|\bar{i}\rangle_2\langle\bar{i}|=\rho_2^1.
\end{align}
Since the eigen-decomposition of $\rho_1$ and $\rho_2$ is unique, we require (i) $\lambda_i=\mu_i, |i\rangle_1=|\bar{i}\rangle_1$, and $|i\rangle_2=|\bar{i}\rangle_2$, or (ii) $\lambda_i=\mu_i=1/2$. However, the condition (i) contradicts $\langle\psi^0|\psi^1\rangle=0$. Thus, we can only require condition (ii), which gives us that $|\psi^0\rangle$ and $|\psi^1\rangle$ are two orthogonal Bell states. Hence, $|\psi\rangle_{123}$ is the standard GHZ state up to a local unitary operation.
\end{proof}

\section*{Discussion}
\label{sec:con}
We have considered controlled dense coding and defined its CP and MCP. We have also calculated MCPs for several important pure three-qubit states such as the extended GHZ states and generalized $W$ states. We found that MCP of the  extended GHZ states is a monotonically increasing function of the three-tangle of the states, which quantifies genuine tripartite entanglement. We also found that the standard GHZ state uniquely achieves the maximal value of MCP among the extended GHZ states, so does the standard $W$ state among the generalized $W$ states. 

Although the extended $W$ states have a genuine tripartite entanglement, the three-tangle of the extended $W$ states vanishes \cite{dur2000}. Based on the operational meaning of MCP, the three-party entanglement of the extended $W$ states can be witnessed by MCP as we have discussed in this paper. Thus, MCP can be used to quantify the three-party entanglement of the extended $W$ states.

We also found the lower and upper bounds of MCP for general three-qubit states. In addition, we proved that for pure states $\rho_{123}$, the equality of the lower bound holds if and only if the three-qubit state is biseparable or fully separable and that the equality of the upper bound holds if and only if the three-qubit state is the standard GHZ state. These properties imply that not only does MCP have an operational meaning in the controlled dense coding scenario but also that it can capture the genuine tripartite entanglement of three-qubit pure states. We remark that it is possible for a separable mixed three-qubit state to attain the maximal value of MCP, for example, $\rho_{123}=\frac{1}{4}(|000\rangle\langle000|+|110\rangle\langle110|+|011\rangle\langle011|+|101\rangle\langle101|)$. Furthermore, it also means that there exists a mixed state whose MCP is not zero even though the state is fully separable. The interpretation is that both classical and quantum correlations of mixed three-qubit states $\rho_{123}$ are captured by MCP. Thus, in future work, it would be worth analyzing the relation between MCP and genuine tripartite quantum and classical correlations.

Even though we have analyzed discrete variable controlled dense coding only, it is worth mentioning generalization to continuous variable (CV) systems. In fact,  CV dense coding and CV controlled dense coding have been studied \cite{braunstein2000, loock2000, jing2003, jing2002}. In order to perform CV dense coding, a two-mode squeezed vacuum is shared by Alice and Bob, and Alice encodes CV information by displacing her beam and sends it to Bob, followed by Bob's joint measurement of quadrature variables. For controlled dense coding, a GHZ-like state of CV is shared among Alice, Bob and Charlie, and Charlie performs homodyne detection on his beam. The  result of the homodyne measurement is then sent to Alice, and Alice and Bob perform CV dense coding using Charlie's measurement result. In fact, it was shown that channel capacities with and without Charlie's assistance are different \cite{jing2002}. At a first glance, it seems possible to define the CP and MCP of CV controlled dense coding in a similar manner. However, channel capacity of CV dense coding and that of discrete variable have some differences. The channel capacity of CV dense coding requires certain constraints to prevent it from diverging, which are not required for discrete variable dense coding. In addition, the channel capacities that we have defined for discrete variable controlled dense coding are the optimal ones for given states so that it can be interpreted as an intrinsic property of the state, whereas optimal channel capacities of CV states are not known in general. Thus, in order to define CP and MCP properly and generalize our results to CV systems, it is required to find a general expression of the optimal channel capacity for a given CV state. It would be an interesting future work to  generalize the CP and MCP of discrete variable controlled dense coding to CV controlled dense coding.

\section*{Acknowledgements} The authors thank Chae-Yeun Park, Hyukjoon Kwon, and Seok Hyung Lee for valuable discussions. This work was supported by a National Research Foundation of Korea (NRF) grant funded by the Korea government (MSIP) (No.\ 2010-0018295) and by the KIST Institutional Program (Project No.\ 2E26680-16-P025).

\section*{Author Contributions} C.O. and K.J. conceived the idea and proposed the study. C.O. and H.K. performed the calculations and the proofs. C.O. and H.J. analyzed and interpreted the results. All authors contributed to discussions, reviewed the manuscript, and agreed with the submission.

\section*{Competing Financial Interests} The authors declare that they have no competing financial interests.

\bibliographystyle{naturemag}

\begin{thebibliography}{00}
\bibitem{bennett1992} Bennett, C. H. \& Wiesner, S. J. Communication via one- and two-particle operators on Einstein-Podolsky-Rosen states. \textit{Phys. Rev. Lett.} \textbf{69,} 2881 (1992).
\bibitem{barenco1995} Barenco, A. \& Ekert, A. K. Dense Coding Based on Quantum Entanglement. \textit{J. Mod. Opt.} \textbf{42,} 1253 (1995).
\bibitem{hausladen1996} Hausladen, P., Josza, R., Schumacher, B., Westmoreland, M. \& Wootters, W. K. Classical information capacity of a quantum channel. \textit{Phys. Rev. A} \textbf{54,} 1869 (1996).
\bibitem{bose2000} Bose, S., Plenio, M. B. \& Vedral, V. Mixed state dense coding and its relation to entanglement measures. \textit{J. Mod. Opt.} \textbf{47,} 291 (2000).
\bibitem{bowen2001} Bowen, G. Classical information capacity of superdense coding. \textit{Phys. Rev. A} \textbf{63,} 022302 (2001).
\bibitem{hiroshima2001}  Hiroshima, T. Optimal dense coding with mixed state entanglement. \textit{J. Phys. A} \textbf{34,} 6907 (2001).
\bibitem{ziman2003} Ziman, M. \& Buzek, V. Correlation-assisted quantum communication. \textit{Phys. Rev. A} \textbf{67,} 042321 (2003).
\bibitem{bruss2004} Bru\ss, D. \textit{et al.} Distributed Quantum Dense Coding. \textit{Phys. Rev. Lett.} \textbf{93,} 210501 (2004).
\bibitem{mozes2005} Mozes, S., Oppenheim, J. \& Reznik, B. Deterministic dense coding with partially entangled states. \textit{Phys. Rev. A} \textbf{71,} 012311 (2005).
\bibitem{mattle1996} Mattle, K., Weinfurter, H., Kwiat, P. G. \& Zeilinger, A. Dense Coding in Experimental Quantum Communication. \textit{Phys. Rev. Lett.} \textbf{76,} 4656 (1996).
\bibitem{fang2000} Fang, X., Zhu, X., Feng, M., Mao, X. \& Du, F. Experimental implementation of dense coding using nuclear magnetic resonance. \textit{Phys. Rev. A} \textbf{61,} 022307 (2000).
\bibitem{mizuno2005} Mizuno, J., Wakui, K., Furusawa, A. \& Sasaki, M. Experimental demonstration of entanglement-assisted coding using a two-mode squeezed vacuum state. \textit{Phys. Rev. A} \textbf{71,} 012304 (2005).
\bibitem{bennett1993} Bennett, C. H. \textit{et al.} Teleporting an unknown quantum state via dual classical and Einstein-Podolsky-Rosen channels. \textit{Phys. Rev. Lett.} \textbf{70,} 1895 (1993).
\bibitem{werner2001} Werner, R. F. All teleportation and dense coding schemes. \textit{J. Phys. A} \textbf{34,} 7081 (2001).
\bibitem{hao2001} Hao, J. C., Li, C. F. \& Guo, G. C. Controlled dense coding using the Greenberger-Horne-Zeilinger state. \textit{Phys. Rev. A} \textbf{63,} 054301 (2001).
\bibitem{jing2003} Jing, J. \textit{et al.} Experimental Demonstration of Tripartite Entanglement and Controlled Dense Coding for Continuous Variables. \textit{Phys. Rev. Lett.} \textbf{90,} 167903 (2003).
\bibitem{karlsson1998} Karlsson, A. \& Bourennane, M. Quantum teleportation using three-particle entanglement. \textit{Phys. Rev. A} \textbf{58,} 4394 (1998).
\bibitem{li2014} Li, X.-H. \& Ghose, S. Control power in perfect controlled teleportation via partially entangled channels. \textit{Phys. Rev. A} \textbf{90,} 052305 (2014).
\bibitem{li2015} Li, X.-H. \& Ghose, S. Analysis of $N$-qubit perfect controlled teleportation schemes from the controller's point of view. \textit{Phys. Rev. A} \textbf{91,} 012320 (2015).
\bibitem{jeong2016} Jeong, K., Kim, J. \& Lee, S. Minimal control power of the controlled teleportation. \textit{Phys. Rev. A} \textbf{93,} 032328 (2016).
\bibitem{hop2016} Hop, V. H., Bich, C. T. \& An, N. B. On the Role of the Controller in Controlled Quantum Teleportation. \textit{Int. J. Theor. Phys.} doi:10.1007/s10773-016-3224-3 (2016)
\bibitem{li2016} Li, X.-H. \& Ghose, S. Analysis of Control Power in Controlled Remote State Preparation Schemes. \textit{Int. J. Theor. Phys.} doi:10.1007/s10773-016-3208-3 (2016)
\bibitem{acin2001} Ac\'{i}n, A. \textit{et al.} Generalized Schmidt Decomposition and Classification of Three-Quantum-Bit States. \textit{Phys. Rev. Lett.} \textbf{85,} 1560 (2000).
\bibitem{acin2001-2} Ac\'{i}n, A., Bru\ss, D., Lewenstein, M. \& Sanpera, A. Classification of Mixed Three-Qubit States. \textit{Phys. Rev. Lett.} \textbf{87,} 040401 (2001).
\bibitem{dur2000} D\"{u}r, W., Vidal, G. \& Cirac, J. I. Three qubits can be entangled in two inequivalent ways. \textit{Phys. Rev. A} \textbf{62,} 062314 (2000).
\bibitem{coffman2000} Coffman, V., Kundu, J. \& Wootters, W. K. Distributed entanglement. \textit{Phys. Rev. A} \textbf{61,} 052306 (2000).
\bibitem{hill1997} Hill, S. \& Wootters, W. K. Entanglement of a Pair of Quantum Bits. \textit{Phys. Rev. Lett.} \textbf{78,} 5022 (1997).
\bibitem{wootters1998} Wootters, W. K. Entanglement of Formation of an Arbitrary State of Two Qubits. \textit{Phys. Rev. Lett.} \textbf{80,} 2245 (1998).
\bibitem{carteret2000} Carteret, H. A. \& Sudbery, A. Local symmetry properties of pure three-qubit states. \textit{J. Phys. A} \textbf{33,} 4981 (2000).
\bibitem{hughston1993} Hughston, L. P., Jozsa, R., \& Wootters, W. K. A complete classification of quantum ensembles having a given density matrix. \textit{Phys. Lett. A} \textbf{183,} 14 (1993).
\bibitem{wilde2013} Wilde, M. \textit{Quantum Information Theory}. Cambridge University Press (Cambridge, 2013).
\bibitem{nielsen2010} Nielsen, M. A. \& Chuang, I. L. \textit{Quantum Computation and Quantum Information}. Cambridge University Press (Cambridge, 2000).
\bibitem{braunstein2000} Braunstein, S. L. \& Kimble, H. J. Dense coding for continuous variables. \textit{Phys. Rev. A} \textbf{61,} 042302 (2000).
\bibitem{loock2000} van Loock, P. \& Braunstein, S. L. Multipartite Entanglement for Continuous Variables: A Quantum Teleportation Network. \textit{Phys. Rev. Lett.} \textbf{84,} 3482 (2000).
\bibitem{jing2002} Zhang, J., Xie, C. \& Peng. K. Controlled dense coding for continuous variables using three-particle entangled states. \textit{Phys. Rev. A} \textbf{66,} 032318 (2002).

\end{thebibliography}

\end{document}